\date{December 2022}
\begin{document}
\title{Mitigating Misinformation Spreading in Social Networks Via Edge Blocking}

\author{Ahad N. Zehmakan\inst{1} \and
Khushvind Maurya\inst{2, 3}}
\authorrunning{A. N. Zehmakan et al.}
%
\institute{The Australian National University, \email{ahadn.zehmakan@anu.edu.au} \and
Indian Institute Of Technology (IIT), Delhi,
\email{khushvind.iitd@gmail.com}
\and {corresponding author}}

\maketitle
\begin{abstract}
The wide adoption of social media platforms has brought about numerous benefits for communication and information sharing. However, it has also led to the rapid spread of misinformation, causing significant harm to individuals, communities, and society at large. Consequently, there has been a growing interest in devising efficient and effective strategies to contain the spread of misinformation. One popular countermeasure is blocking edges in the underlying network.

We model the spread of misinformation using the classical Independent Cascade model and study the problem of minimizing the spread by blocking a given number of edges. We prove that this problem is computationally hard, but we propose an intuitive community-based algorithm, which aims to detect well-connected communities in the network and disconnect the inter-community edges. Our experiments on various real-world social networks demonstrate that the proposed algorithm significantly outperforms the prior methods, which mostly rely on centrality measures.



\keywords{Social Networks \and Misinformation Spreading \and Countermeasure \and Edge Blocking \and Community Detection.}
\end{abstract}

\section{Introduction} \label{introduction}

In recent years, social media has revolutionized the way individuals connect with each other, share information, and express themselves. It has created new opportunities for political engagement, social activism, and community building, and has enabled individuals to access information and resources that were previously unavailable. Social media has also transformed the way businesses and organizations interact with customers and stakeholders, providing new channels for marketing, customer service, and public relations.

However, the widespread adoption of social media platforms has undeniably resulted in a significant increase in the dissemination of misinformation. This issue permeates various domains such as politics, economics, and sociology~\cite{eismann2021diffusion}. For example, following the breach of The Associated Press Twitter account, a fabricated announcement circulated, stating that "Breaking: Two Explosions in the White House and Barack Obama is injured." As a consequence, this false information led to a staggering loss of 10 billion USD in just a few hours and triggered a rapid crash in the US stock market, cf.~\cite{peter2013bogus}.

The far-reaching consequences of misinformation spreading in different contexts cannot be understated, as they possess the potential to shape public opinion, influence decision-making processes, and even impact social cohesion. The need to address the challenges posed by the rampant spread of misinformation across diverse topics has emerged as a critical concern in today's interconnected digital age.

The spread of misinformation on social media is a complex and multifaceted phenomenon that involves a range of factors, including the structure of social networks, the psychology of misinformation diffusion, and the role of technology in shaping information sharing. Understanding these factors is critical for developing effective interventions to minimize the spread of misinformation.

In order to mitigate the spread of misinformation in social networks, several approaches have been proposed in the past. Reducing the spread of misinformation can be achieved by some form of blocking, where a set of nodes or edges are identified and blocked from the network, under some budget constraints. Blocking a node implies that the account is either removed or banned and all its connections with other nodes are suspended, while blocking an edge implies that the connection between the two nodes connected by the edge is suspended, for example by not exposing posts from one user to another.

To effectively combat the spread of misinformation within social networks, the initial step involves promptly identifying the misinformation as it emerges, this can be done using various misinformation detection techniques, cf.~\cite{wu2019misinformation,zubiaga2018detection}.
However, it is important to recognize that detected misinformation may resurface in modified forms, highlighting the significance of monitoring subsequent posts associated with a piece of misinformation. As a result, to minimize the spread of misinformation within a network, two popular approaches can be used:
\begin{itemize}
    \item \textbf{Source-Aware Approach} of misinformation contamination relies on identifying the sources responsible for propagating the misinformation and the users who have accepted and disseminated it within a social network. Then, containment strategies can be implemented to minimize their influence and curb the spreading ability of the misinformation.
    \item \textbf{Source-Agnostic Approach} focuses on mitigating the flow of misinformation within the network, without the prior knowledge of the specific sources of the misinformation. Through the implementation of a containment strategy, the aim is to reduce the overall dissemination of misinformation without specifically targeting its sources.
\end{itemize}

The source-aware approach is generally more powerful since it benefits from some extra source of information. However, there are two fundamental issues. Firstly, the identification of misinformation sources involves complexities such as data collection and network evolution, and source identification algorithms could be inaccurate. (Shelke and Attar~\cite{shelke2019source} provide a compressive overview of various source detection approaches and the challenges faced with these methods.) Furthermore, the source-aware approach disregards the anonymity and privacy of individuals within the social networks, up to a large extent. Consequently, there has been a growing interest, cf.~\cite{kuhlman2013blocking,zareie2022rumour}, in devising effective source-agnostic strategies that aim to minimize the flow of misinformation in the network without explicitly targeting individual sources. The present work also falls under the umbrella of this line of research.


Two commonly employed strategies to contain the misinformation spreading are node and edge blocking. Edge blocking has garnered greater attention recently, cf.~\cite{yao2015minimizing,tong2017efficient,kuhlman2013blocking,zareie2022rumour}, since it is less intrusive (i.e., disrupts the original functionality and flow of the network less aggressively) and provides controlling power in a more granular level (note that usually blocking a node is equivalent to blocking all its adjacent edges).

In the present work, we focus on designing an effective and efficient source-agnostic edge-blocking strategy. To model the spread of misinformation, we exploit the popular Independent Cascade model~\cite{kempe2003maximizing}. We investigate the problem of minimizing the expected number of nodes that will be exposed to a piece of misinformation when we are allowed to block $k$ edges for some given integer $k$. We show that this problem is NP-hard. (It is worth stressing that while this problem has been extensively studied by prior work and several approximation approaches were proposed~\cite{zareie2022rumour}, we are the first to formally prove that the problem is computationally hard.)

We propose an intuitive community-based algorithm, which first uses a community detection algorithm such as Louvain community detection algorithm~\cite{blondel2008fast}, to partition the nodes into communities (i.e., subsets of well-connected nodes). Then, we try to slow down the flow of misinformation between these communities by disconnecting the inter-community edges. The idea is that stopping the spread of misinformation inside a community is hard since it requires blocking a significant number of edges. However, there are substantially less edges between communities whose blocking could drastically reduce the extent that the misinformation travels.

We provide our experimental findings on several real-world graph data. We observe that our proposed algorithm consistently and significantly outperforms the existing algorithms.


\noindent \textbf{Outline.} The rest of the paper is structured as follows. Section~\ref{preliminaries} covers the preliminaries. Then, in Section~\ref{prior_work} we give an overview of related previous work. The hardness results are provided in Section~\ref{Hardness}. Our proposed algorithm is presented in Section~\ref{algorithm}. Finally, our experimental findings and comparison of algorithms are provided in Section~\ref{evaluation}.

\subsection{Preliminaries} \label{preliminaries}

\subsubsection{Graph Definitions.} Let $G=\left(V,E, \omega \right)$ be a weighted graph, where function $\omega:E\rightarrow [0,1]$ assigns a value between $0$ and $1$ to each edge in the graph. Let us define $n:=|V|$ and $m:=|E|$. 
For a node $v\in V$, $N\left(v\right):=\{v'\in V: (v,v') \in E\}$ is the \emph{neighborhood} of $v$. Furthermore, $\hat{N}(v):=N(v)\cup \{v\}$ is the \textit{closed neighborhood} of $v$.
Let $d\left(v\right):=|N\left(v\right)|$ be the \emph{degree} of $v$ in $G$.
The \textit{girth} of a graph $G$ is the length of the shortest cycle contained in the graph.
If $G$ has no cycle, then the girth is defined to be infinity.

\subsubsection{Independent Cascade Model~\cite{kempe2003maximizing,goldenberg2001talk}.} Each node can have one of the following three states:
\begin{itemize}
    \item \textit{Ignorant (white)}: A node which has not heard of the misinformation.
    \item \textit{Spreader (red)}: A spreader is a node who has heard the misinformation and spreads it.
    \item \textit{Stifler (orange)}: A node who has heard the misinformation but does not spread it.
\end{itemize}

Let a coloring $\mathcal{C}$ be a function $\mathcal{C}:V\rightarrow \{w,r,o\}$, where $w$, $r$, and $o$ correspond to white, red, and orange, respectively. The process starts from an initial coloring $\mathcal{C}_0$. Then, in each round $t\in N$, all nodes simultaneously update their state according to following updating rules:
\begin{itemize}
    \item A white node $v$ becomes red with probability:
\begin{equation}
    \label{p^*-eq}
    p^*(v):= 1-\prod_{v'\in N(v)\& \mathcal{C}_{t-1}(v')=r}\left(1-\omega\left((v,v')\right)\right).
\end{equation}
    \item A red node becomes orange.
    \item An orange node remains orange.
\end{itemize}
More precisely, we have:

$\mathcal{C}_t(v)$ =
$\begin{cases}
r & \text{if } \mathcal{C}_{t-1}(v)=w \text{ with probability } p^*(v)\\ 
w & \text{if } \mathcal{C}_{t-1}(v)=w \text{ with probability } 1-p^*(v)\\
o & \text{if } \mathcal{C}_{t-1}(v)=o \lor \mathcal{C}_{t-1}(v)=r.
\end{cases}$

Let $(v,v')\in E$, where $v$ is white and $v'$ is red. Then, $v'$ makes $v$ red with probability $\omega((v,v'))$. This explains the choice of probability $p^*(v)$ in Equation~\eqref{p^*-eq}. Furthermore, a red node has one chance to spread, and then it becomes orange (stifler) and remains orange forever. This model is usually known as the \textit{Independent Cascade} (IC) model, cf.~\cite{goldenberg2001talk,kempe2003maximizing}.  \ \\

The main focus of the present paper is to devise an effective edge-blocking strategy to minimize the spread of misinformation, simulated by the Independent Cascade model. An exact formulation of the problem is given in Section~\ref{Hardness}.

\subsection{Prior Work}\label{prior_work}

\subsubsection{Information Spreading Models.} 
\label{prio-info}
A plethora of (mis)-information spreading models have been developed and studied in recent years, cf.~\cite{kempe2003maximizing,zehmakan2023rumors,gartner2020threshold,zehmakan2019spread,zehmakan2023random,zehmakan2021majority}. Here, we focus on the most fundamental and relevant models.
\begin{itemize}
    \item \textbf{Independent Cascade Model~\cite{goldenberg2001talk,kempe2003maximizing}.}
As described in Section~\ref{preliminaries}, this is a model of information diffusion that assumes that information spreads through a network of individuals in a series of steps. In every step, it considers each node to be in one of the three states, red (spreader), white (ignorant) and orange (stifler). Then, each red node gets one chance (before becoming orange) to make its white neighbors red (i.e., inform them). Motivated by viral marketing, the main focus in this model is to develop algorithms for finding subsets of nodes that maximize the spread of the red color, mostly exploiting monotonicity and submodularity properties~\cite{mossel2007submodularity}. Different variants of the Independent Cascade model have also been studied, for example, where there is a forgetting mechanism in place~\cite{zehmakan2023rumors} or when there are more than one pieces of (mis)-information spreading~\cite{liu2019influence}
\end{itemize}
\begin{itemize}
    \item \textbf{Linear Threshold Model~\cite{kempe2003maximizing}.} This model of information diffusion assumes that individuals are more likely to be exposed to some (mis)-information if a larger fraction of their neighbors have been exposed to it. More precisely, each node $v$ has a threshold $\tau_v$ chosen randomly from the interval $[0,1]$. Then, a white node $v$ becomes red once at least $\tau_v$ fraction of its neighbors are red, and then remains red forever. In the Linear Threshold model also, motivated by viral marketing, the problem of finding a set of seed nodes which can maximize the final number of red nodes has been extensively studied, cf.~\cite{kempe2003maximizing,rahimkhani2015fast}. Generalized variants of threshold model have been considered too, cf.~\cite{kothari1985generalized}.
    \item \textbf{Susceptible-Infected-Recovered (SIR) Model~\cite{daley1964epidemics}.} The SIR model is a commonly used epidemiological model that describes the spread of infectious diseases in a population. It divides the population into three compartments: Susceptible, Infected, and Recovered. Then, each node can change its state following a predefined stochastic updating rule relying on some model parameters, infection and recovery factor. While the model was originally introduced to emulate the spread of diseases, it has gained some popularity in modeling (mis)-information spreading. The original model assumes that the homogeneous mixing condition holds, that is, the nodes are connected via a clique. However, network based variants of the model have been studied as well, cf.~\cite{kenah2007network}.
    
    

\end{itemize}

\subsubsection{Countermeasures.}
\label{prio-counter}

Reducing the propagation of misinformation is a significant challenge in the field of social network analysis and has garnered considerable interest. Various approaches have been proposed to tackle this issue and mitigate the spread of false information in social networks.

\paragraph{Edge Blocking Countermeasure:}
 One countermeasure which has gained significant popularity is edge blocking, cf.~\cite{yao2015minimizing,tong2017efficient,kuhlman2013blocking,zareie2022rumour,n2020rumor,zehmakan2023rumors}.
Holme et al.~\cite{holme2002attack} considered four different edge blocking strategies: blocking by the descending order of the degree and the betweenness centrality, calculated for either the initial network or the resulting network during the blocking procedure. It is observed edges blocked in order of betweenness show more efficient misinformation mitigation as compared to edges blocked in decreasing order of degree. Kimura et al.~\cite{kimura2007extracting} introduced a method of efficiently estimating the influence of nodes using bond percolation. This bond percolation method then was used in~\cite{kimura2008minimizing,kimura2009blocking} to identify a set of edges which, when blocked, maximize the contamination degree of the network. Yan et al.~\cite{yan2019rumor} proposed a greedy method to identify the most critical edges among a set of candidate edges to minimize the spread of a misinformation. Pagerank centrality~\cite{brin1998anatomy} is used in~\cite{yan2019rumor} as a criterion for blocking the edges to minimize the spread of misinformation. The susceptibility of a graph to diffusion is defined in~\cite{khalil2013cuttingedge} as the sum of the expected influence of each node when it is the single source for a cascade. Further, a greedy method is proposed that minimizes the spread susceptibility of the network. Tong et al.~\cite{tong2012gelling} provided an approach where the edges blocked depend on the eigenvalue of the adjacency matrix of the network. Finally, in a very recent work, Zareie and Sakellariou~\cite{zareie2022rumour} took into account additional features of edges (beyond centrality), such as entropy, to determine what edges to block. Some more results on edge blocking problem are discussed in, \cite{burzyn2006np,yannakakis1978node}.

\paragraph{Other Countermeasures:}
Motivated by blocking accounts in real-world online social platforms, the countermeasure of blocking nodes has been widely studied. Various node blocking methods have been investigated in the literature, that use degree centrality, betweenness centrality and closeness centrality as a criterion to block nodes, cf.~\cite{he2015modeling,wang2015maximizing,habiba2010finding,dey2017centrality}. In~\cite{taninmics2020minimizing}, the authors proposed two heuristic algorithms for minimizing the spread of misinformation simulated by the Independent Cascade model via node blocking. Pham et al.~\cite{pham2018maximizing} studied a variant of the problem with some time and budget constraints. Schneider et al.~\cite{schneider2011suppressing} considered the setup where the sum of the sizes of the connected large clusters in the network is considered as an information flow metric and nodes with high betweenness centrality are suggested to be blocked to minimize the sum of the sizes.

Some other countermeasures have also been considered. For example, the authors of~\cite{tripathy2010study,ding2020efficient} studied truth spreading as a misinformation mitigation method, where truth is spread as anti-misinformation. Zehmakan et al.~\cite{zehmakan2023rumors} introduced a similar countermeasure, where a subset of nodes is selected to be ``fact-checkers'' whose role is to trigger the spread of truth once exposed to a piece of misinformation. See~\cite{gausen2021can,zehmakan2023rumors} for more examples of countermeasures.


\section{Problem Formulation and Hardness Result} \label{Hardness}
In this section, we aim to show that the problem of minimizing the spread of misinformation via blocking edges is computationally hard. It is worth emphasizing that while this problem has been studied extensively by prior work and several approximation algorithms have been put forward, cf.~\cite{yao2015minimizing,tong2017efficient,kuhlman2013blocking,zareie2022rumour}, this work is the first to analyze the computational complexity of this problem rigorously. Let us first provide a more concrete formulation of the problem.\\

\noindent\textsc{Edge Blocking Problem.}\\
\noindent\textit{\textbf{Input}}: A weighted graph $G=(V_G,E_G,\omega)$, an integer $k$, a random distribution over all possible colorings. \\
\textit{\textbf{Output}}: The maximum expected final number of white nodes when $k$ edges are blocked (i.e., removed), starting from an initial coloring $\mathcal{C}_0$ chosen from the given distribution and following the Independent Cascade model.

Our hardness result is provided in Theorem~\ref{hardness-thm}, building on the \textsc{Densest Subgraph Problem}~\cite{manurangsi2017almost}. We first need to provide some basic definitions and lemmas. \\


\noindent\textsc{Densest Subgraph Problem}
\\
\textbf{Input}: A connected undirected graph $H=(V_H,E_H)$ and an integer $k<|V_H|$. \\
\textbf{Output}: The maximum number of edges in a subgraph induced by $k$ nodes in $H$. \\

\begin{theorem}[\cite{manurangsi2017almost}]
\label{dense-hardness}
The \textsc{Densest Subgraph Problem} is NP-hard.
\end{theorem}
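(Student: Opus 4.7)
The plan is to reduce from the classical \textsc{Clique} decision problem, which is well known to be NP-complete: given an undirected graph $F$ and an integer $\ell$, decide whether $F$ contains an $\ell$-clique. The driving observation is elementary: any induced subgraph on $k$ vertices has at most $\binom{k}{2}$ edges, with equality precisely when those $k$ vertices form a clique. Hence an oracle that returns the maximum edge count over all $k$-vertex induced subgraphs immediately decides the existence of a $k$-clique by checking whether the returned value equals $\binom{k}{2}$.

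Given an instance $(F,\ell)$ of \textsc{Clique}, I would first reduce to the connected case, since \textsc{Densest Subgraph Problem} as stated requires a connected input graph. If $F$ is disconnected, append a single universal vertex $u$ adjacent to every vertex of $F$, producing a connected graph $F'$; any $\ell$-clique in $F$ lifts to an $(\ell+1)$-clique in $F'$ via $u$, and conversely any $(\ell+1)$-clique in $F'$ either contains $u$ (whose removal yields an $\ell$-clique in $F$) or already lies entirely inside $F$ (and hence contains an $\ell$-clique). Thus it suffices to decide whether $F'$ contains an $(\ell+1)$-clique.

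I would then set $H := F'$ and $k := \ell + 1$ as the \textsc{Densest Subgraph Problem} instance, which is a polynomial-time construction. By the edge-count observation above, the optimal value returned equals $\binom{k}{2}$ if and only if $F'$ contains a $k$-clique, which holds if and only if $F$ contains an $\ell$-clique; comparing the solver's output to the fixed threshold $\binom{k}{2}$ in constant time completes the reduction. The argument is deliberately short because the statement is quoted from~\cite{manurangsi2017almost} and the NP-hardness of the basic decision version traces back to precisely this classical reduction; the only mild obstacle I anticipate is the connectivity requirement of the input, which is dispatched by the universal-vertex gadget above, and no deeper structural insight into density or approximability is needed for the hardness part itself.
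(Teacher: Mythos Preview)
The paper does not actually prove this theorem; it is stated as a citation of~\cite{manurangsi2017almost} and used as a black box in the reduction for Theorem~\ref{hardness-thm}. Your proposal therefore supplies a proof where the paper gives none, and the reduction you sketch---from \textsc{Clique}, via the observation that a $k$-vertex induced subgraph attains $\binom{k}{2}$ edges exactly when it is a clique---is indeed the standard and correct argument.

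One small presentational wrinkle: you define $F'$ and set $k=\ell+1$ only in the disconnected case, but your final paragraph is phrased as if $H:=F'$, $k:=\ell+1$ is the reduction in all cases. Either make the universal-vertex step unconditional (the equivalence ``$F$ has an $\ell$-clique iff $F'$ has an $(\ell+1)$-clique'' holds regardless of whether $F$ was connected), or add a sentence handling the already-connected case with $H:=F$, $k:=\ell$. Also verify the side constraint $k<|V_H|$ from the problem statement: since \textsc{Clique} is trivial when $\ell\ge |V_F|$, you may assume $\ell<|V_F|$, which gives $k<|V_H|$ in either branch. With these cosmetic fixes the argument is complete.
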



\textbf{Remark.} Note that to be precise, when talking about NP-hardness, we need to refer to the decision variant of the problem, where an integer $a$ is also given as input and the problem is to determine whether there is a subgraph induced by $k$ nodes which has $a$ edges.


\begin{lemma}
\label{special-case-lemma}
The \textsc{Densest Subgraph Problem} is polynomial-time solvable if $k$ is smaller than the girth of $H$.
\end{lemma}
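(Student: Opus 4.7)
The plan is to combine two observations: under the girth assumption, any subgraph induced by $k$ vertices must be acyclic, and the connectivity of $H$ guarantees that we can always attain the resulting trivial upper bound.

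First I would argue the upper bound. Let $S\subseteq V_H$ with $|S|=k$. Any cycle in the induced subgraph $H[S]$ would be a cycle of $H$ of length at most $k$, contradicting $k<\mathrm{girth}(H)$. Hence $H[S]$ is a forest on $k$ vertices, and therefore has at most $k-1$ edges, with equality if and only if $H[S]$ is a tree, i.e.\ connected.

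Next I would exhibit a matching lower bound in polynomial time. Since $H$ is connected and $k\le |V_H|$, pick any vertex $v\in V_H$ and run a breadth-first search from $v$, letting $S$ be the first $k$ vertices visited. The BFS tree restricted to $S$ is a connected spanning subgraph of $H[S]$ on all $k$ vertices, so $H[S]$ is connected. Combined with the forest conclusion above, $H[S]$ is a tree with exactly $k-1$ edges. Thus the optimum of the \textsc{Densest Subgraph Problem} equals $k-1$ and is attained by any such $S$; both the optimum value and a witnessing set are produced in $O(|V_H|+|E_H|)$ time. For the decision variant with parameter $a$, the answer is simply YES iff $a\le k-1$.

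There is no substantial obstacle: the only thing to be careful about is the distinction between ``there exists a subtree of $H$ on $k$ vertices'' and ``there exists $S$ with $H[S]$ connected''. The forest property collapses the two, and connectivity of $H$ together with the BFS argument guarantees the latter for every $k\le |V_H|$, so the bound $k-1$ is always realized.
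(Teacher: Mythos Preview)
Your proof is correct and follows essentially the same approach as the paper: bound the induced edge count above by $k-1$ via the acyclicity forced by the girth hypothesis, and below by $k-1$ via the existence of a connected $k$-vertex induced subgraph in the connected graph $H$. The only difference is that you make the lower-bound witness explicit via a BFS construction, whereas the paper simply asserts that such a connected subset exists; your version is slightly more detailed but not a different argument.
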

\begin{proof}
Let $OPT_{DS}(H,k)$ be the optimal solution for the \textsc{Densest Subgraph Problem} for input $H$ and $k$. Consider an arbitrary set of nodes of size $k$. The induced subgraph by this set contains at most $k-1$ nodes. This is true because otherwise, the graph has a cycle of length $k$ or smaller which is in contradiction with the assumption of the theorem. Therefore, we have $OPT_{DS}(H,k)\le k-1$. On the other hand, any set of $k$ nodes which induces a connected subgraph in $k$ has at least $k-1$ edges, which implies $OPT_{DS}(H,k)\ge k-1$. Hence, we conclude that $OPT_{DS}(H,k)= k-1$. We can check whether the condition of the lemma is satisfied in polynomial time and return $k-1$ as the answer. \qed
\end{proof}


\begin{definition}[Transformer]
\label{transformer2}
Consider a connected undirected graph $H=(V_H,E_H)$, where $V_H:=\{v_1,\cdots, v_{n_H}\}$ and $E_H:=\{e_1,\cdots, e_{m_H}\}$. Construct graph $G=(V_G, E_G,\omega)$ as follows:
\begin{itemize}
    \item $V_G:=X\cup Y\cup \{z\}$ where $X:=\{x_1,\cdots,x_{n_H}\}$ and $Y:=\{y_1,\cdots, y_{m_H}\}$.
    \item $E_G:=\{(y_j,x_i):v_i\in e_j\}\cup \{(x_i,z):1\le i\le n_H\}$.
    \item $\omega(e)=1$ for $e\in E_G$.
\end{itemize}
\end{definition}


\begin{figure}[h]
  \centering
  \includegraphics[width=0.5\linewidth]{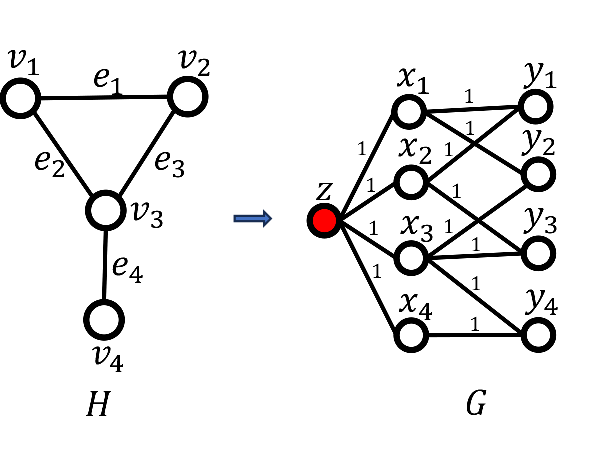}
  \caption{An example graph $H$ and the obtained graph $G$ after applying Transformer (from Definition~\ref{transformer2}).}
  \label{figure-transformer}
\end{figure}

\begin{theorem}
\label{hardness-thm}
The \textsc{Edge Blocking Problem} is NP-hard, even when all edges have weight 1.
\end{theorem}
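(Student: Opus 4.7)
The plan is to prove Theorem~\ref{hardness-thm} by a polynomial-time reduction from the \textsc{Densest Subgraph Problem}, which is NP-hard by Theorem~\ref{dense-hardness}. Given an instance $(H,k)$ with $H$ connected, I would apply the Transformer of Definition~\ref{transformer2} to obtain a graph $G=(V_G,E_G,\omega)$ with $\omega\equiv 1$, take the initial-coloring distribution to be the point mass on the coloring that assigns $z$ the color red and every other node the color white, and use the same budget $k$ in the resulting Edge Blocking instance. The construction is polynomial in $|V_H|+|E_H|$ and a point mass is a legal input distribution.

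Because every edge has weight $1$, the Independent Cascade process is deterministic and terminates in two rounds: in round $1$, $z$ turns red every $x_i$ whose edge $(x_i,z)$ is unblocked; in round $2$, every $y_j$ with an unblocked edge to some newly-red $x_i$ turns red. Writing $I:=\{i:(x_i,z)\text{ is blocked}\}$ and $B:=\{(i,j):(x_i,y_j)\text{ is blocked}\}$, the final set of white nodes is
\[
W=\{x_i:i\in I\}\cup\bigl\{y_j:\text{for each }v_i\in e_j,\ i\in I\text{ or }(i,j)\in B\bigr\}.
\]
For any $S\subseteq V_H$ with $|S|=k$, blocking exactly the edges $\{(x_i,z):v_i\in S\}$ yields $|W|=k+|E(H[S])|$, so the Edge Blocking optimum is at least $k+\mathrm{OPT}_{DS}(H,k)$.

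For the matching upper bound, I would argue via a local swap that at optimality $B=\emptyset$ and $|I|=k$. Consider an optimal strategy with some $(i,j)\in B$. If $i\in I$, then $x_i$ is permanently white, so the block is wasted and can be reassigned freely. Otherwise, the block is only useful when $y_j$ ends up white, which forces the other endpoint $v_{i'}$ of $e_j$ to satisfy $i'\in I$ or $(i',j)\in B$. In the first subcase, swapping the block $(x_i,y_j)$ for $(x_i,z)$ (i.e.\ moving $i$ into $I$) keeps $y_j$ white (both of its $X$-neighbours are now in $I$), turns the previously red $x_i$ white, and cannot turn any other $y_{j''}$ red because the only change is that one $x_i$ went from red to white. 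The second subcase is handled by the analogous pair-swap of $(x_i,y_j),(x_{i'},y_j)$ for $(x_i,z),(x_{i'},z)$. Iterating collapses $B$ to $\emptyset$ and forces $|I|=k$, whence $|W|=k+|E(H[S])|$ for $S:=\{v_i:i\in I\}$, and maximising over $|S|=k$ gives exactly $k+\mathrm{OPT}_{DS}(H,k)$.

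I expect the swap argument to be the main technical obstacle: one has to verify carefully that turning an $x_i$ from red to white cannot inadvertently expose any previously white $y_{j''}$, and to check the pair-swap case correctly. Once this is in place, a polynomial-time algorithm for \textsc{Edge Blocking} on the constructed instance would recover $\mathrm{OPT}_{DS}(H,k)$; for the formal decision variant, one takes the Edge Blocking threshold to be $k+a$ whenever the Densest Subgraph threshold is $a$, contradicting Theorem~\ref{dense-hardness} unless $\mathrm{P}=\mathrm{NP}$.
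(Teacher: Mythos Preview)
Your reduction is exactly the paper's: apply the Transformer of Definition~\ref{transformer2}, take the point mass on the coloring with only $z$ red, and prove $OPT_{EB}(G,k,\mathcal{C}_0)=k+OPT_{DS}(H,k)$; the lower bound is identical in both. The difference lies in the upper bound. The paper first disposes of the case $k<\text{girth}(H)$ via Lemma~\ref{special-case-lemma}, and in the remaining regime uses the consequence $OPT_{EB}\ge 2k$ together with a somewhat indirect maximal-subset argument to locate an optimal blocking contained in $E_{XZ}$. Your local swap is cleaner and avoids the girth split altogether: whenever a blocking uses an $E_{YX}$-edge, the swap (or pair-swap) you describe turns at least one $x_i$ from red to white and cannot turn any node red, so the white count \emph{strictly} increases; hence every optimal blocking already satisfies $B=\emptyset$ and $|I|=k$. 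That is a genuine simplification over the paper's route.

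Two small points to tighten. First, you should also state the subcase $i\notin I$ with $y_j$ ending up red (the block is useless): the same swap $(x_i,y_j)\mapsto(x_i,z)$ again strictly improves, so this is only an expository omission. Second, for the ``reassign freely'' step when $i\in I$, note explicitly that an unblocked $(x_{i''},z)$ exists because $|I|\le k-|B|\le k-1<n_H$; the constraint $k<|V_H|$ in the \textsc{Densest Subgraph} input is what guarantees this.
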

\begin{proof}
The proof builds on a reduction from the \textsc{Densest Subgraph Problem}. Assume that there is a polynomial-time algorithm $\mathcal{A}$ for the \textsc{Edge Blocking Problem} (when all edge weights are 1). Let $H=(V_H,E_H)$, for $V_H:=\{v_1,\cdots, v_{n_H}\}$ and $E_H:=\{e_1,\cdots, e_{m_H}\}$, and $k$ be the input of the \textsc{Densest Subgraph Problem}. If $k$ is smaller than the girth of $H$, then we can solve the problem in polynomial time according to Lemma~\ref{special-case-lemma}. Otherwise, we use Transformer from Definition~\ref{transformer2} to build graph $G$ from $H$. Furthermore, consider the coloring $\mathcal{C}_0$ where node $z$ is colored red and the rest of nodes are white, see Fig.~\ref{figure-transformer} for an example. We define the random distribution to pick this coloring with probability 1.
Let $OPT_{DS}(H,k)$ be the optimal solution to the \textsc{Densest Subgraph Problem} for the input $H$ and $k$ and $OPT_{EB}(G,k,\mathcal{C}_0)$ be the optimal solution of the \textsc{Edge Blocking Problem} for the input $G$, $k$, and $\mathcal{C}_0$. We prove that
\begin{equation}
\label{eq:opt}
OPT_{DS}(H,k)=OPT_{EB}(G,k,\mathcal{C}_0)-k.
\end{equation}

Note that the Transformer generates $G$ from $H$ in polynomial time. Thus, there is an algorithm $\mathcal{A'}$ for the \textsc{Densest Subgraph Problem} which first executes Transformer. Then, it runs the algorithm $\mathcal{A}$ to compute $OPT_{EB}(G,k,\mathcal{C}_0)$ and subtracts it by $k$ to obtain $OPT_{DS}(H,k)$ (using Equation~(\ref{eq:opt})). This algorithm clearly runs in polynomial time. This implies that the \textsc{Edge Blocking Problem} is NP-hard based on Theorem~\ref{dense-hardness}.

It remains to prove that Equation~(\ref{eq:opt}) holds. Let a set $S\subset V_H$ of size $k$ induce a subgraph with $OPT_{DS}(H,k)$ edges. Define $X_S:=\{x_i:v_i\in S\}$, which is of size $k$. If we block the edges between nodes in $X_S$ and $z$ (i.e., $\{(x_i,z):x_i\in X_S\}$), then all nodes in $Y_S:=\{y_j: e_j=\{v_{i_1},v_{i_2}\}\ \textrm{for}\ v_{i_1},v_{i_2}\in S\}$ remain white because they become disconnected from node $z$ (the only node which is red in $\mathcal{C}_0$). Since $|Y_S|=OPT_{DS}(H,k)$, we have $OPT_{DS}(H,k)+k\le OPT_{EB}(G,k,\mathcal{C}_0)$. (We added $k$ since nodes in $X_S$ also remain white.)

Now, we prove that $OPT_{DS}(H,k)\ge OPT_{EB}(G,k,\mathcal{C}_0)-k$. Since in $G$ all edges have weight 1, all nodes, except the ones which cannot reach $z$, become red and then orange after at most three rounds. Let $E_{XZ}$ be the set of edges from nodes in $X$ to $z$ and $E_{YX}$ be the set of edges from $Y$ to $X$. We claim that there is an optimal solution which only blocks edges in $E_{XZ}$. Let set $S_1$ be an optimal solution (i.e., blocking edges in $S_1$ makes $OPT_{EB}(G,k,\mathcal{C}_0)$ nodes remain white forever) and $e_j\in S_1$ for some $e_j \in E_{YX}$.

Note that $k$ is at least as large as the girth of $H$ (since we already excluded the other case). This implies that $OPT_{DS}(H,k)\ge k$ since a connected subgraph including the smallest cycle induces at least $k$ edges. Since we already proved that $OPT_{DS}(H,k)\le OPT_{EB}(G,k,\mathcal{C}_0)-k$, we get $2k\le OPT_{EB}(G,k,\mathcal{C}_0)$. Then, there exists a maximal subset $D\subset S_1\cap E_{XZ}$ which covers at least $|D|$ nodes in $Y$. We say a node $y_j$ is \textit{covered} if both its neighbors in $X$ are disconnected from $z$ by blocking edges in $D$. This is true because otherwise $S_1\cap E_{XZ}$ covers at most $|S_1\cap E_{XZ}|-1$ nodes in $Y$. In addition to these nodes, the only nodes which could remain white are the nodes in $\{w:(w,w')\in S_1 \textrm{ for some } w'\}$ whose size is trivially at most $|S_1|$. Thus, at most $|S_1|+|S_1\cap E_{XZ}|-1\le 2|S_1|=2k-1$ nodes remain white, which is a contradiction since we argued that the optimal solution is at least $2k$. This implies that there is such a node set $D$.

Define $D':=\{v_i:(x_i,z)\in D\}$ and $S_1^{\prime}:=\{v_i: (x_i,z)\in S_1\}$. Let $v$ be a node in $V_H\setminus S_1^{\prime}$, which has an edge $e$ into $D'$. (Such a node must exist since $k<n$ and we defined $D$ to be maximal.) Assume that $x$ is the node corresponding to $v$ in $X$ and $y$ is the node corresponding to $e$ in $Y$. If an edge from $y$ to $X$ is in $S_1$, remove it, otherwise remove another edge in $S_1\cap E_{YX}$ which must exist by assumption, and instead add $(x,z)$ to obtain $S_2$. Since node $y$ will be covered (and will remain white), the solution of $S_2$ is at least as large as the one from $S_1$. Thus, there exists an optimal solution $S$ of size $k$ such that $S\cap E_{YX}=\emptyset$. This means that there are $OPT_{EB}(G,k,\mathcal{C}_0)-k$ nodes in $Y$ which remain white (i.e., the edges between $z$ and both their neighbors are in $S$). Note that we subtracted by $k$ since the $k$ nodes in $X$ whose edge to $z$ is removed remain white too.

Let us define $S_H:=\{v_i:(x_i,z)\in S\}$. Then, $S_H$ induces a subgraph with $OPT_{EB}(G,k,\mathcal{C}_0)-k$ edges. This implies that $OPT_{DS}(H,k)\ge OPT_{EB}(G,k,\mathcal{C}_0)-k$. \qed

\end{proof}

\textbf{Remark.} Note that when all edges have weight 1, the problem of finding the final expected number of orange nodes in the IC model for a given graph $G$ and coloring $\mathcal{C}_0$ is equivalent to a reachability problem, which can be solved in polynomial time, while when any edge weight is allowed, the problem is known to be \#P-hard~\cite{kempe2003maximizing}. We proved that the \textsc{Edge Blocking Problem} is NP-hard even when all edge weights are 1. Thus, the hardness comes from the choice of $k$ edges rather than the IC model.



\section{Proposed Algorithm} \label{algorithm}

A community refers to a subset of nodes within a graph that exhibits a higher degree of interconnectedness than the rest of the network. Communities are often characterized by a greater density of edges between nodes within the community compared to edges connecting nodes between different communities.
Identifying communities within social networks can provide valuable insights into the structure and dynamics of the network.

There are several community detection algorithms in the literature. Some of the most popular ones are the Louvain~\cite{blondel2008fast}, Leiden~\cite{traag2019louvain}, Surprise~\cite{aldecoa2013surprise}, and Walktrap~\cite{clauset2004finding} algorithm. We rely on Louvain algorithm for community detection, which works by iteratively optimizing the modularity of a network, which is a measure of how well the nodes in a network are grouped into communities. The Louvain algorithm is fast and scalable, and it has been shown to be effective in detecting communities in a variety of networks.
Our algorithm uses this algorithm to first find a set of communities such that the number of inter-community edges is at most $k$, the budget for the number of edges to be blocked. Then, we simply block all these edges.

The Louvain algorithm receives a graph $G$ and a resolution parameter $r$. The value of $r$ controls the number of communities (and consequently, the number of inter-community edges) the algorithm will output. Our goal is to generate a set of communities such that the number of inter-community edges is smaller than $k$ but as close as possible to it. 

To achieve this, we employ a multi-step process, which is described in Algorithm~1. This essentially follows a hit-and-trial process by updating the resolution parameter and re-running the Louvain algorithm. In addition to graph $G$ and budget $k$, it also receives an initial resolution parameter $r$, two repetition parameters $h_1$ and $h_2$, and an increasing factor $f>1$. It initially sets $S=\emptyset$ and $count=0$. Then, it runs in a \textbf{while} loop until $count$ is larger than the number of repetitions $h_1$. Inside this, it first runs a \textbf{for} loop for $h_2$ times. Each time, it runs the Louvain algorithm and finds the inter-community edges. Then, for each of these edge sets $\mathcal{E}$, if the size of $\mathcal{E}$ is smaller than $k$, but larger than current $S$, then we update $S=\mathcal{E}$. This way, the size of $S$ gets closer to the budget $k$, but it does not exceed it. Note that we run the \textbf{for} loop $h_2$ times, since the Louvain algorithm is nondeterministic. Once the \textbf{for} loop is over, we update the resolution factor to $r=r*f$, where $f$ is the increasing factor. Furthermore, if $|\mathcal{E}|>k$, we increment $count$. Note that at the beginning, $count$ might remain zero until $r$ is large enough such that $\mathcal{E}$ (the number of inter-community edges) becomes large. Then, $count$ will increase until it exceeds $h_1$ and then the \textbf{while} loop is over. We then return the set $S$. 



\begin{algorithm}[h]
    \caption{Pseudocode for our proposed algorithm}
    \hspace*{\algorithmicindent} \textbf{Input:} $G(V,E,\omega)$, Resolution $r$, Increasing Factor $f$, Repetitions $h_1$ and $h_2$, and Budget $k$\\
    \hspace*{\algorithmicindent} \textbf{Output:} Set of edges $S$ of size at most $k$ to be blocked.
    \begin{algorithmic}[1]
        \Procedure{Algorithm}{$G,r,f,h_1,h_2,k$}
        \State S = $\emptyset$
        \State count = 0
        \While{$(count <= h_1)$}
            
            \For{\texttt{ i from 1 to $h_2$}}
                \State C = set of communities returned by the Louvain algorithm for $G$, $r$
                \State $\mathcal{E}$ = set of inter-community edges for $C$
                \If{$|\mathcal{E}| > |S|$ and $|\mathcal{E}| <= k$}
                    \State $S=\mathcal{E}$
                \EndIf
            \EndFor
            \State update $r = r*f$
            \If{$|\mathcal{E}| > k$}
                \State count++
            \EndIf
        \EndWhile
        \State \textbf{return} $S$
        \EndProcedure
    \end{algorithmic}
\end{algorithm}



\section{Evaluation} \label{evaluation}

\subsection{Experimental Setup}
\label{setup:sec}

\subsubsection{Social Networks.} For our experiments, we use three subgraphs of Facebook, namely Facebook from SNAP dataset~\cite{snapnets} and Facebook-Politician and Facebook-Govt from Network Repository~\cite{nr}. Some graph properties of these networks are listed in the table below.


\begin{table}[]
\label{stats:table}
    \centering
    \caption{Some characteristics of the networks used in the experiments, including average degree $d_{avg}$, maximum degree $d_{\max}$, diameter $D$, average clustering coefficient $K_{avg}$, and number of triangles $T$.}
    \label{characterstics of datasets}
    \begin{tabular}{|c| c| c| c| c| c| c| c|}
        \hline
           Network & $n$ & $m$ & $d_{avg}$ & $d_{max}$ & $D$ & $K_{avg}$  & $T$\\ \hline
        Facebook  &  4039 & 88234 & 43.691 & 1045 & 8 & 	0.6055 & 1612010 \\
        Facebook-Govt  &  7057 & 89429 & 25.344 & 697 & 10 & 	0.410 & 523854 \\
        Facebook-Politician  &  5908 & 41706 & 14.118 & 323 & 14 & 	0.6055 & 174632 \\  \hline
    \end{tabular}
\end{table}

\subsubsection{Edge Weights.} Most real-world networks are unweighted, and one needs to introduce a meaningful procedure for weight assignment. Using the communication information of individuals on various real-world networks, the authors in~\cite{onnela2007structure,goyal2010learning} observed that there is a strong correlation between the number of shared friends of two individuals and their level of communication. Consequently, they proposed the usage of similarity measures, such as Jaccard-like parameters, to approximate the weights of connections between nodes. This is also aligned with the well-studied strength of weak ties hypothesis~\cite{granovetter1973strength}. Therefore, we assign the edge weights according to the Jaccard index~\cite{jaccard1901etude} in our set-up. More precisely, for each edge $(v,u)\in E$, we set $\omega\left((v,u)\right)=\frac{|\hat{N}(v)\cap \hat{N}(u)|}{N(v)\cup N(v)}$.
We use $|\hat{N}(v)\cap \hat{N}(u)|$ instead of $|N(v)\cap N(u)|$ in the numerator to ensure that the weight of an edge is never equal to zero.


Some of the prior algorithms that we discuss in Section~\ref{exp:sec} rely on a measure of distance between two nodes. Since the edge weights represent the strength of the relations, it is conventional to use their ``opposite'' form when calculating distance. More precisely, for an edge $(v,u)$, we use $1-\omega((v,u))$.


\subsubsection{Algorithm Parameters.} For our algorithm, as discussed in Section~\ref{algorithm}, we need to set the initial resolution parameter $r$, the repetitions $h_1$ and $h_2$, and increasing factor $f > 1$. In our experiments, we set $r=0.01$ for Facebook and Facebook-Politician and $r=0.05$ for Facebook-Govt, $f=1.05$, and $h_1=h_2=5$. Note that the closer $f$ is to $1$ and the larger $h_1$ and $h_2$ are, the more precise our algorithm would be. There is nothing specifically unique about these choices. They are just some reasonable choices that allow our algorithm to perform well on the datasets used, as will be discussed in Section~\ref{exp:sec}.

\subsubsection{Containment Factor.} As mentioned, the Independent Cascade model serves as a simulation tool to emulate the process of misinformation spreading. Initially, a set $R$ of nodes is red and the rest is white. To measure the effectiveness of an edge blocking algorithm that blocks edges in a set $S$, we rely on \textit{containment factor} 

\begin{equation}
cf = 100\cdot\frac{\phi(G(V,E, \omega),R)-\phi(G(V,E\setminus S,\omega),R)}{\phi(G(V,E,\omega),R)}.
\end{equation}

Here $\phi(G(V,E,\omega),R)$ and $\phi(G(V,E\setminus S,\omega),R)$ denote the expected final number of orange nodes (when initially nodes in $R$ are red) before and after blocking edges in $S$. (Note that we focus on orange nodes, since all red nodes eventually become orange.) Thus, $\phi(G(V,E,\omega),R)$ is the number of nodes that become orange before blocking any edges, and $cf$ measures what percentage of them will remain white once edges in $S$ are blocked.

Note that maximizing $cf$ is the same as maximizing the final number of white nodes, used in the \textsc{Edge Blocking Problem}. To be consistent with prior work, cf.~\cite{zareie2022rumour}, we use $cf$ in our evaluations to compare the algorithms.


\subsection{Comparison of Algorithms}
\label{exp:sec}
We compare our proposed algorithm against algorithms from prior work. 
\begin{itemize}
    \item \textbf{RNDM:} A set of edges is randomly selected to be blocked.
    \item \textbf{HWT:} Edges with the largest weight are blocked.
    \item \textbf{DEG~\cite{kempe2003maximizing,yan2019rumor}:} The edges for which the sum of the degree of their two endpoints are the largest are blocked.
    \item \textbf{WDEG:} This is the same as DEG, except the weighted degrees (the sum of the weight of adjacent edges for each node) are considered.

    \item \textbf{CLO:} The edges for which the sum of the closeness of their two endpoints are the largest are blocked.
    \item \textbf{WCLO:} This is the same as CLO, except the edge weights (their ``opposite`` actually, as explained in Section~\ref{setup:sec}) are considered when calculating closeness.
    \item \textbf{BET~\cite{dey2017centrality}:} The edges with the highest betweenness centrality are blocked.
    \item \textbf{WBET:} The edges with the highest weighted betweenness are blocked.
    \item \textbf{PGRK~\cite{brin1998anatomy,yan2019rumor}:} The edges for which the sum of the PageRank centrality of their two endpoints are the largest are blocked.
    \item \textbf{IEED~\cite{zareie2022rumour}:} In each iteration, a ``critical'' edge is determined and blocked from the network. Criticality is determined using nodes’ influence and edges’ blocking efficiency, weighed using a notion of entropy. (Please refer to~\cite{zareie2022rumour} for more details on this algorithm.)
    
\end{itemize}

For each of our three networks, we select a randomly chosen set $R$ of nodes of size $|R|= 0.001n$ to be red initially (and the rest white). We let the number of blocked edges to range from $0.01 m$ to $0.2 m$. Then, we compute the containment factor $cf$ for all the algorithms by blocking the corresponding edges and running the Independent Cascade model. For each experiment, we select $\mid R\mid$ nodes to be red, and then run the Independent Cascade Model 10 times to obtain the $cf$ for the same set of initial red nodes. We run each of these experiments 10 times for different sets of initial red nodes and report the average value of $cf$. (The standard deviations are given in Appendix~\ref{appendix:sd}.) 



\begin{figure}[htp]
    \centering
    \begin{minipage}[c]{0.45\linewidth}
        \includegraphics[width=6.5 cm]{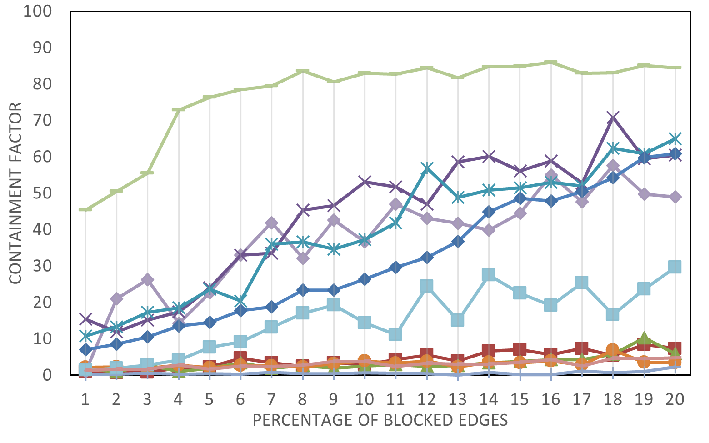} 
    \end{minipage}
    \begin{minipage}[c]{0.45\linewidth}
        \includegraphics[width=6.5 cm]{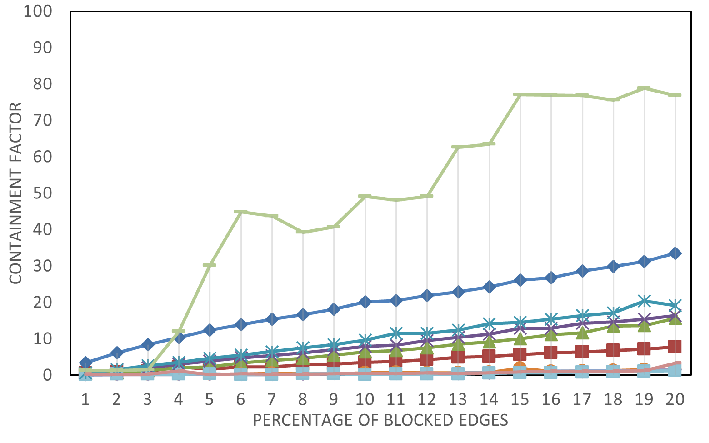}
    \end{minipage}
    \begin{minipage}[c]{0.45\linewidth}
        \includegraphics[width=6.5 cm]{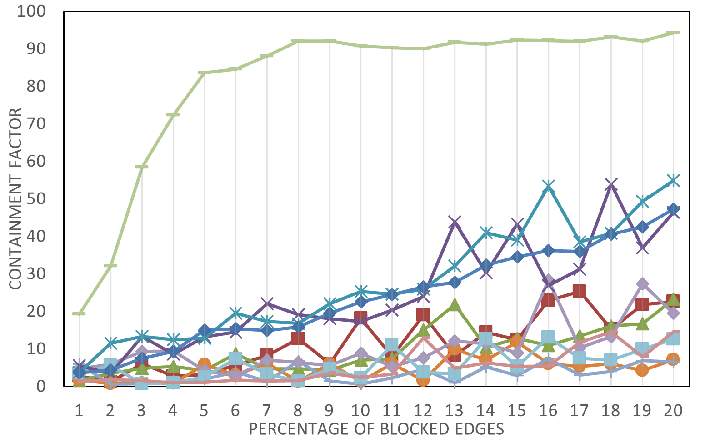}
    \end{minipage}
    \includegraphics[width=12cm]{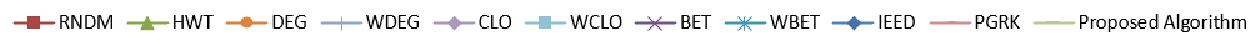}
    \label{fig: results}
    \caption{The containment factor for different algorithms on Facebook (top-left), Facebook-Govt (top-right), and Facebook-Politician (bottom) networks.}
\end{figure}

The outcomes of our experiments are provided in Fig.~2. The vertical axis denotes the containment factor of the algorithms, while the horizontal axis is the percentage of edges blocked.
As expected, it can be seen that as the percentage of blocked edges increases, the containment factor of the methods increases. We observe that our proposed algorithm consistently outperforms all other algorithms, especially by a significant margin for higher percentages of blocked edges. Our proposed algorithm is followed by BET, WBET, and IEED. The only case where our algorithm does not perform better than the other algorithms is for small percentages of blocked edges on the Facebook-Govt dataset.

\section{Conclusion} \label{conclusion}
We studied the problem of mitigating misinformation spreading in social networks using blocking edges. After providing a formal formulation of the problem, we proved that it is NP-hard. Then, we proposed an intuitive community-based algorithm, which first partitions the node set into well-connected communities by leveraging the Louvain algorithm. Then, it blocks the inter-community edges. Through experiments on real-world social networks, we observed that this algorithm, despite its simplicity, consistently and significantly outperforms the prior algorithms.

There are several potential future research avenues. It would be interesting to devise more effective strategies to choose the final resolution parameter in our proposed algorithm such that the number of inter-community edges is as close as possible to the budget $k$. Furthermore, other community detection algorithms can be explored, rather than the Louvain algorithm. One also might apply a community detection algorithm to devise a \textit{node} blocking strategy. Finally, studying the \textsc{Edge Blocking Problem} where each edge has a given cost would be interesting from both a practical and theoretical perspective.


\bibliographystyle{splncs04}
\bibliography{ref}

\appendix

\section{Standard Deviations}
\label{appendix:sd}
The standard deviation obtained for the proposed algorithm in $cf$ for the three datasets is given in Table \ref{standard deviation}. 
\begin{table}[]
    \centering
    \caption{Standard deviation obtained for the proposed algorithm.}
    \label{standard deviation}
    \begin{tabular}{| c | c | c | c |}
        \hline
        Percentage & Facebook & Facebook-Govt & Facebook-Politician \\  \hline
        \textbf{1} & 13.13838739 & 0.536237924 & 13.62691459 \\
        \textbf{2} & 11.47495844 & 0.41963079 & 11.68562978  \\ 
        \textbf{3} & 7.922823641 & 0.406590157 & 12.87673285  \\ 
        \textbf{4} & 7.828210311 & 3.765703269 & 11.54790164  \\ 
        \textbf{5} & 5.521524447 & 12.09778772 & 6.344192445  \\ 
        \textbf{6} & 5.794904371 & 9.049993554 & 8.917378102  \\ 
        \textbf{7} & 4.393543369 & 10.34596116 & 6.689110222  \\ 
        \textbf{8} & 6.45707192 & 9.882091322 & 4.062169098  \\ 
        \textbf{9} & 7.01420812 & 9.639261152 & 4.078979175  \\ 
        \textbf{10} & 4.586620155 & 11.71557316 & 4.285124269  \\ 
        \textbf{11} & 5.191117306 & 9.887803037 & 6.042205631  \\ 
        \textbf{12} & 4.300630574 & 9.524686288 & 3.913324957  \\ 
        \textbf{13} & 3.196539101 & 9.898218078 & 2.538393193  \\ 
        \textbf{14} & 4.516281164 & 9.493850407 & 3.342031983  \\ 
        \textbf{15} & 5.215616401 & 10.76495219 & 3.927128581  \\ 
        \textbf{16} & 5.813759159 & 4.749856723 & 4.339572944  \\ 
        \textbf{17} & 2.844741933 & 5.597944464 & 2.68540086  \\ 
        \textbf{18} & 5.075444151 & 7.42852049 & 2.974853124  \\ 
        \textbf{19} & 3.379384559 & 5.622314272 & 3.102571693  \\ 
        \textbf{20} & 3.678885097 & 5.077641732 & 2.681650404  \\ \hline
    \end{tabular}
\end{table}

\end{document}